\documentclass[conference]{IEEEtran}

\usepackage[utf8]{inputenc}
\usepackage{verbatim}
\usepackage{optidef}

\usepackage{amsmath,amssymb,amsthm,amsfonts,mathrsfs}
\usepackage{tablefootnote}
\usepackage[margin=0.5in]{geometry}

\usepackage{hyperref} 

\usepackage{graphicx,color,cases} 
\usepackage{xfrac}  
\usepackage{xargs}

\usepackage{tikz}	
\usetikzlibrary{backgrounds,fit,decorations.pathreplacing}  
\usetikzlibrary{automata} 

\usepackage[colorinlistoftodos,prependcaption]{todonotes}

\definecolor{DarkGreen}{rgb}{0.1,0.5,0.1}
\definecolor{DarkRed}{rgb}{0.5,0.1,0.1}
\definecolor{DarkBlue}{rgb}{0.1,0.1,0.5} 
  
\def\>{\rangle} 
\def\<{\langle}


\newtheorem{definitionenv}{Definition}
\newtheorem{lemmaenv}[definitionenv]{Lemma}
\newtheorem{theoremenv}[definitionenv]{Theorem}
\newtheorem{corollaryenv}[definitionenv]{Corollary}
\newtheorem{propositionenv}[definitionenv]{Proposition}
\newtheorem{conjectureenv}[definitionenv]{Conjecture}
\newtheorem{exampleenv}{Example}
\newtheorem{app-lemmaenv}[section]{Lemma}

\newenvironment{definition}{\begin{definitionenv}\rm}{\end{definitionenv}}
\newenvironment{lemma}{\begin{lemmaenv}\rm}{\end{lemmaenv}}
\newenvironment{theorem}{\begin{theoremenv}\rm}{\end{theoremenv}}
\newenvironment{corollary}{\begin{corollaryenv}\rm}{\end{corollaryenv}}
\newenvironment{example}{\begin{exampleenv}\rm}{\end{exampleenv}}
\newenvironment{proposition}{\begin{propositionenv}\rm}{\end{propositionenv}}
\newenvironment{conjecture}{\begin{conjectureenv}\rm}{\end{conjectureenv}}
\newenvironment{app-lemma}{\begin{app-lemmaenv}\rm}{\end{app-lemmaenv}}

\newcommand{\bd}{\begin{definition}}
\newcommand{\ed}{\end{definition}}
\newcommand{\bl}{\begin{lemma}}
\newcommand{\el}{\end{lemma}}
\newcommand{\elp}{\hspace*{\fill} $\Box$
                 \end{lemma}}
\newcommand{\bt}{\begin{theorem}}
\newcommand{\et}{\end{theorem}}
\newcommand{\etp}{\hspace*{\fill} $\Box$
                 \end{theorem}}
\newcommand{\bc}{\begin{corollary}}
\newcommand{\ec}{\end{corollary}}
\newcommand{\ecp}{\hspace*{\fill} $\Box$
                 \end{corollary}}
\newcommand{\bcj}{\begin{conjecture}}
\newcommand{\ecj}{\end{conjecture}}

\newcommand{\be}{\begin{example}}
\newcommand{\ee}{\end{example}}
\newcommand{\eep}{\hspace*{\fill} $\Box$
                 \end{example}}
\newcommand{\bp}{\begin{proposition}}
\newcommand{\ep}{\end{proposition}}
\newcommand{\epp}{
                 \end{proposition}}

\newcommand{\bx}{{\bf x}}

\newcommand{\cG}{{\cal G}}

\newcommand{\cK}{{\cal K}}

\newcommand{\cN}{{\cal N}}

\newcommand{\cQ}{{\cal Q}}


\newcommand{\mC}{{\mathbb C}}



\newcommand{\ket}[1]{|#1\rangle}

\newcommand{\tr}{\text{tr}}
\newcommand{\Tr}[1]{\text{Tr}\left(#1\right)}
\newcommand{\wt}[1]{\text{wt}\left(#1\right)}

\newcommand{\ot}{\otimes}

\newcommand{\psiew}{\ket{{\texttt{AUX}}}}

\newcommand{\supewe}{}

\newcommand{\supwe} {^\texttt{we}}

\newcommand{\ssl}{^{\texttt{SL}}}

\newcommand{\zo}{\{0,1\}}

\newcommandx{\yellownote}[2][1=]{\todo[linecolor=yellow,backgroundcolor=yellow!25,bordercolor=yellow,#1]{#2}}
\newcommandx{\greennote}[2][1=]{\todo[inline,linecolor=olive,backgroundcolor=green!25,bordercolor=olive,#1]{#2}}

\setlength{\marginparwidth}{2cm}

\begin{document}
\title{Linear programming bounds for quantum amplitude damping codes}
\author{%
  \IEEEauthorblockN{Yingkai Ouyang}
  \IEEEauthorblockA{Department of Physics \& Astronomy\\
  University of Sheffield\\
  Sheffield, S3 7RH, United Kingdom\\
                    Email:  y.ouyang@sheffield.ac.uk}
  \and
  \IEEEauthorblockN{Ching-Yi Lai}
  \IEEEauthorblockA{Institute of Communications Engineering\\
			National Chiao Tung University\\
			Hsinchu 30010, Taiwan\\
                    Email: cylai@nctu.edu.tw}
}

\maketitle

\begin{abstract}
Given that approximate quantum error-correcting (AQEC) codes have a potentially better performance than perfect quantum error correction codes, it is pertinent to quantify their performance.
While quantum weight enumerators establish some of the best upper bounds on the minimum distance of quantum error-correcting codes, these bounds do not directly apply to AQEC codes.
Herein, we introduce quantum weight enumerators for amplitude damping (AD) errors and work within the framework of approximate quantum error correction. 
In particular, we introduce an auxiliary exact weight enumerator that is intrinsic to a code space and moreover, we establish a linear relationship between the quantum weight enumerators for AD errors and this auxiliary exact weight enumerator. 
This allows us to establish a linear program that is infeasible only when AQEC AD codes with corresponding parameters do not exist. 
To illustrate our linear program, we numerically rule out the existence of three-qubit AD codes that are capable of correcting an arbitrary AD error.
\end{abstract}


\section{Introduction}
The distance of an error-correcting code is of central importance in coding theory, because it quantifies the number of adversarial errors that can be corrected. 
For codes of fixed length and rate, upper and lower bounds on their distance can be determined.
The best lower bounds can be obtained from various randomized code constructions that yield the Gilbert-Varshamov bound~\cite{MS77} and this is also true in the quantum case~\cite{FeM04,JiX11,ouyang2014concatenated}).
On the contrary, markedly different techniques are used to derive upper bounds. 
In classical coding theory, weight enumerators count the weight distribution of codewords in a code~\cite{MS77}. 
The MacWilliams identity establishes a linear relationship between the weight enumerators of a code and that of its dual code. 
This allows one to obtain upper bounds on the distance of codes. Further extensions of this technique leads to the celebrated linear programming bounds \cite{Aal90} and their generalizations \cite{schrijver2005new}.

The notion of weight enumerators in the quantum setting is less obvious, because quantum codes on $n$ qubits are subspaces of $\mathbb C^{2^n}$, and these subspaces do not in general admit a combinatorial interpretation.
Shor and Laflamme nonetheless introduced a meaningful definition for weight enumerators for quantum codes \cite{ShL97} in terms of the codes' projectors $P$ and a nice error basis for matrices.
In particular, the Shor-Laflamme (SL) quantum weight enumerators are sums of terms of the form $|\tr(EP)|^2$ and $\tr(E P E^\dagger P)$, respectively, where the sums are performed over all Paulis $E$ of a given weight.
We will call the vector of these enumerators labeled by Pauli weights the A-type and B-type quantum weight enumerators, respectively.
Shor and Laflamme showed that the A- and B-type quantum weight enumerators are still linearly related in a way reminiscent of the classical relationship~\cite{SL97}. The relation between the two enumerators is the quantum analogue of the famous MacWilliams identity. 
Variations on the SL enumerators were then studied by Rains, which allowed better bounds on the parameters of quantum codes~\cite{Rains98a}.
Because of the existence of a linear relationship between the two types of enumerators, 
linear programming techniques can be applied to establish upper bounds on the minimum distance for (small) quantum stabilizer codes~\cite{CRSS98}. Algebraic linear programming bounds based on the MacWilliams identity, such as the Singleton, Hamming, and the first linear programming bounds, are also derived for general quantum codes~\cite{AL99}. These results have been extended to entanglement-assisted quantum stabilizer codes~\cite{LBW13,LA18} and quantum data-syndrome codes~\cite{ALB20}.
Also there is a MacWilliams identity for (entanglement-assisted) quantum convolutional codes~\cite{LHL16}.

Although the distance of a quantum code is a meaningful metric with respect to adversarial noise, estimates on the performance of a quantum code derived from the distance under specific noise models are often overly pessimistic.
For instance, while a minimum of five qubits is needed to perfectly correct an arbitrary error~\cite{ShL97}, four qubits suffice to correct a single \textit{amplitude damping}  (AD) error~\cite{LNCY97}.
Quantum codes designed specifically to combat AD errors are called \textit{AD codes} and are well-studied~\cite{LNCY97,CLY97,SSSZ09, DGJZ10,jackson2016concatenated,ouyang2014permutation,ouyang2015permutation,grassl2018quantum,ouyang2019permutation}.
However, existing quantum weight enumerators give no direct result regarding limits on the ultimate performance of AD codes. 
To better understand these fundamental limits, it would be advantageous to have MacWilliams-type identities for different quantum weight enumerators defined for various noisy quantum channels.
In spite of this, finding suitable generalizations of the quantum weight enumerators to specialized noise models remains an open problem.

There are several challenges in generalizing linear programming bounds for quantum codes to allow the consideration of AD errors.
First, quantum weight enumerators only describe quantum error correction in the perfect setting~\cite{Rains98a}, and therefore cannot describe quantum codes designed for the AD channel that often approximately satisfy the Knill-Laflamme perfect quantum error correction conditions~\cite{KnL97}.
Second, while the Pauli errors used to express quantum weight enumerators form a nice error basis, the Kraus operators of an AD channel do not span the matrix space on all of the qubits.

In this work, we overcome the above challenges, and extend the theory of quantum weight enumerators to deal with approximate quantum error-correcting (AQEC) codes and AD errors.
Namely, we generalize the two SL quantum weight enumerators to address quantum codes that approximately correct AD errors. 
While we do not have a MacWilliams identity that establishes a direct linear relationship between these two generalized quantum weight enumerators, we do establish an indirect linear relationship between them.
To enable this, we introduce an \textit{auxiliary exact weight enumerator} with respect to Pauli operators,
which is exact in the sense that it depends explicitly on the matrix decomposition of the code projector $P$ in the Pauli basis.
We thereby show linear connections between this enumerator and our two generalized quantum weight enumerators. 
This allows us to establish a linear program that is infeasible only when AQEC AD codes do not exist. 
To illustrate our linear program, we numerically rule out the existence of  three-qubit AQEC AD codes   that are capable  of  correcting  an  arbitrary  AD  error.
Our linear program cannot eliminate the existence of a four-qubit code that can correct one AD error and this agrees to the four-qubit AD code proposed in~\cite{LNCY97}. 

This paper is organized as follows.
In the  next section, we review notation for Pauli operators, amplitude damping channels, and review quantum weight enumerators. 
In Sec.~\ref{sec:AD-wtenum}, we introduce our quantum weight enumerators specialized to deal with AD errors and approximate quantum error correction.
In Sec.~\ref{sec:aux}, we introduce auxiliary weight enumerators, and in Sec.~\ref{sec:connection-matrices}, we propose connection matrices that establish linear relationships between our quantum weight enumerators and the auxiliary weight enumerators.
In Sec.~\ref{sec:linear program bounds}, we formulate a linear program that is infeasible only when the corresponding AD-code does not exist.
We conclude our results and discuss the potential for further work in Sec.~\ref{sec:discussions},

\section{Preliminaries} \label{sec:prelims}
 
\subsection{Pauli Operators}
A single-qubit state space is a two-dimensional complex Hilbert space $\mathbb{C}^2$,
and a multiple-qubit state space is simply the tensor product space of single-qubit spaces.
The Pauli matrices
$$I_2=\begin{bmatrix}1 &0\\0&1\end{bmatrix}, X=\begin{bmatrix}0 &1\\1&0\end{bmatrix},  Z=\begin{bmatrix}1 &0\\0&-1\end{bmatrix}, Y=iXZ$$
form a basis of the linear operators on a single-qubit state space.
Let  
\[{\cG}_n=\{M_1\otimes M_2\otimes \cdots \otimes M_n: M_j\in\{I_2,X,Y,Z\} \},\] which is a basis of the linear operators on the $n$-qubit state space $\mathbb{C}^{2^n}$.
We denote the weight of any element of $\cG_n$
as $\wt{E}$, which is the number of $M_j$'s that are non-identity matrices.

\subsection{Amplitude Damping Channel}

Amplitude damping (AD) errors model energy relaxation in quantum harmonic oscillator systems and photon loss in photonic systems.
By ensuring that each quantum harmonic oscillator couples identically to a unique bosonic bath, in the low temperature limit,
the effective noise model can be described by an AD channel.
When quantum information lies in a qubit, 
the corresponding AD channel 
$\cN_\gamma$ models energy loss in a two-level system, where $\gamma$ is the probability that an excited state relaxes to the ground state, and $\cN_\gamma$ has two Kraus operators $A_0$ and $A_1$, where
\[
A_0=\begin{bmatrix}1&0\\0&\sqrt{1-\gamma}\end{bmatrix}, \quad A_1=\begin{bmatrix}0&\sqrt{\gamma}\\ 0&0\end{bmatrix}.
\]
When energy loss occurs independently and identically in an $n$-qubit system, 
the corresponding noisy channel can be modeled 
with $\cN_{n,\gamma}=\cN_{\gamma}^{\otimes n}$.
The set of all Kraus operators of $\cN_{n,\gamma}$ can be written as 
\begin{align}
\mathcal K =  \{A_{\bx}\triangleq A_{x_1}\otimes \cdots \ot A_{x_n}: \bx\in \zo^n\}.   
\end{align}
Since the Kraus operator $A_1$ models energy loss on one qubit, 
it is useful to know how many times the Kraus operator $A_1$ occurs in $A_\bx$.
Hence we define the following property of $A_\bx$.

\bd The  weight of $A_{\bx}$ is defined as $\wt{\bx}$.
\ed
The weight of $A_\bx$ counts the number of qubits where $A_\bx$ induces energy loss.
For example, $\wt{A_1\otimes A_0\ot A_1}=\wt{A_{101}}= 2$, which corresponds to energy loss in two qubits.
Using this notion of weight, we partition the set of Kraus operators $\cK$ accordingly. Namely, by denoting 
\begin{align}
    \cK_i =  \{ E \in \cK : \wt{E} = i\},
\end{align}
we have $\cK = \cK_0 \cup \dots \cup \cK_n$.
In this terminology, a code corrects $t$ errors perfectly if all the errors in $\cK_i$ for $i \le t$ satisfy the Knill-Laflamme quantum error correction criterion \cite{KnL97}.

\subsection{Quantum Codes and Weight Enumerators}
An $((n,M))$ quantum code $\cQ$ is an $M$-dimensional subspace of $\mC^{2^n}$.
Let $P$ denote the codespace projector of $\cQ$.
Shor and Laflamme define two  weight enumerators $\{A\ssl_{i}\}$ and $\{B\ssl_{i}\}$ of  $\cQ$  by
\begin{align}
A\ssl_{i}=&\frac{1}{M^2}\sum_{ E\in {\cG}_n, \wt{E}=i } \Tr{E P}\Tr{E^\dag P}, \label{eq:Bij}
\end{align}
and
\begin{align}
B\ssl_{i}=&\frac{1}{M}\sum_{ E\in {\cG}_n, \wt{E}=i} \Tr{E PE^\dag P}, \label{eq:Bij_perp}
\end{align}
for $i=0,\dots,n$ \cite{SL97}.
Note that ${\cG}_n$ is a basis for the linear operators on $\mathbb{C}^{2^n}$.
These two weight enumerators will be called \emph{SL enumerators} in this article.
The power of the SL enumerators is that the perfect quantum error correction criterion of Knill and Laflamme are equivalent to certain linear constraints on these SL enumerators.
From this context, perturbations to the Knill-Laflamme quantum error correction criterion can be understood by directly perturbing linear constraints on quantum weight enumerators.

\section{Quantum weight enumerators for amplitude damping errors}
\label{sec:AD-wtenum}

In what follows, we   generalize the SL enumerators to allow direct consideration of AD errors. The above-mentioned key technical difficulty is inability of the corresponding set of Kraus operators $\cal K$ to span the space of linear operators on $\mathbb{C}^{2^n}$.
We nonetheless can generalize SL enumerators to deal with AD channels. Our new enumerators are vectors with coefficients
\begin{align}
    A_i    &= 
    \frac{1}{(\tr P)^2}
    \sum_{E \in \cK_i} \tr(EP)\tr(E^\dagger P),\ i=0,\dots,n, \\ 
    B_i     &= 
    \frac{1}{\tr P}
    \sum_{E \in \cK_i} \tr(EPE^\dagger P) , \ i=0,\dots,n.
\end{align}
It can be shown, as in \cite{AL99} that 
\begin{align}
B_i\geq A_i, \ i=0,\dots,n. \label{eq:BgeqA}
\end{align}
Furthermore, since the code projector $P$ is Hermitian and we have the cyclic property of the trace, it is clear that
$\tr(EP)\tr(E^\dagger P) = \tr(EP)\tr(P^\dagger E^\dagger) = |\tr(EP)|^2$. This implies that $A_i$ is always a sum of non-negative terms, and hence we must have
\begin{align}
    A_i \ge 0 , \quad 0 \le i \le n.
\end{align}
Now the sum of B type enumerators retains
interpretation as the fidelity
of a quantum code after the action of the AD channel without error correction.
Hence
\begin{align}
    B_0 + \cdots + B_n \le 1.
\end{align}
Since the only Kraus operator in $\cK_0$ has a minimum singular value of $(1-\gamma)^{n/2}$ for $A_0$, we have the lower bound
\begin{align}
    A_0 \ge (1-\gamma)^n.
\end{align}
This is reminiscent of the scenario for SL weight enumerators, where we have $A\ssl_0=1.$

Furthermore, it is easy to see that every $B_i$ is at most of order $O(\gamma^i)$, Since the operator norm of Kraus operators from $\cK_i$ is $\gamma^{i/2}$, the operator norm of $EPE^\dagger$ for any $E \in \cK_i$ is at most $\gamma^i \tr(P)$. Hence it follows from the H\"older inequality on the Hilbert-Schmidt inner product that 
\[|\tr(EPE^\dagger P )|= |\<EPE^\dagger ,P \>|\le \|EPE^\dagger \|\|P\|_1,\]
where $\|\cdot \|_1$ denotes the trace norm and $\|\cdot \|$ denotes the operator norm, which is the maximum singular value of a matrix.
Thus by counting the number of terms in $\cK_i$, we have
\begin{align}
    B_i/\gamma^{i} \le \binom n i. 
\end{align}
In what follows, we use the Dirac ket notation to represent weight enumerators as in \cite{LHL16}.
Let $\{|0\>,\dots, |n\>\}$ be an orthonormal basis of $\mathbb{R}^{n+1}$. 
The SL enumerators of $\cQ$ for AD channels are 
\begin{align}
|A \>  &=  A_0 |0\> + \dots + A_{n} |n\>, \notag \\
|B \>  &=  B_0 |0\> + \dots + B_{n} |n\> .
\end{align}
In this paper, we define approximate quantum error correction using the language of quantum weight enumerators. In the case of perfect quantum error correction, for a quantum code that has minimum distance $d$, we must have 
\[
B\ssl_i - A\ssl_i = 0 , \quad i = 0, \dots, d-1.
\]
These equations can be relaxed to yield the following definition of approximate quantum error correction for AD channels.
\bd  \label{def:tc-AD-code}
An $((n,M))$ quantum code is called a $(t,c)$-AD code if
its quantum weight enumerators satisfy the constraints
\begin{align}
B_i-A_i\leq c \gamma^{t+1}, i=0,\dots,t. \label{eq:BleqA}
\end{align}
\ed

\be
The four-qubit code in~\cite{LNCY97} has two logical codewords
\begin{align*}
\ket{0}_L=& \frac{1}{\sqrt{2}}\left(\ket{0000}+\ket{1111}\right),\\
\ket{1}_L=& \frac{1}{\sqrt{2}}\left(\ket{0011}+\ket{1100}\right).
\end{align*}
It has weight enumerators
\begin{align*}
A_0&=\gamma^4/64 - \gamma^3/4 + 5 \gamma^2/4 - 2 \gamma + 1;\\
A_1&=A_2=A_3=0;\\
A_4&=\gamma^4/64.
\end{align*}
\begin{align*}
B_0&=\gamma^4/16 - \gamma^3/4 + 5 \gamma^2/4 - 2 \gamma + 1;\\
B_2&=3 \gamma^4/8 - 3 \gamma^3/4 + 3 \gamma^2/4;\\
B_4&=\gamma^4/16;\\
B_1&=B_3=0.
\end{align*}
Therefore, this code cannot be a $(2,c)$-AD code for any $c>0$,
and is consistent with the fact that this code corrects a single AD error \cite{LNCY97}.

\ee

\be The weight enumerators of the nine-qubit Shor code are as follows.
Note that (\ref{eq:BgeqA}) holds here. In addition, by Definition~\ref{def:tc-AD-code},
this code cannot correct AD errors of weight three.

 \begin{align*}
A_0=B_0 &=
1 - 9\gamma/2 + 153\gamma^2/16 \\
&\quad - 399\gamma^3/32 + 351\gamma^4/32 +  O(\gamma^5); \\
A_i = B_i &=0, \quad i=1,2,4,5,7,8;\\
A_3 = A_9 &=0;\\
B_3 & = 3\gamma^3/4 + O(\gamma^4);\\
B_9 & =  \gamma^9/32;\\
A_6 & =  3\gamma^6/16 + - 9\gamma^7/32 + 45\gamma^8/256 + O(\gamma^9);\\
B_6 & =  3\gamma^6/16  - 9\gamma^7/32  + 9\gamma^8/32   + O(\gamma^9).
\end{align*}
Since the  leading order of $B_3-A_3$  in $\gamma$ is cubic, the Shor code cannot be a $(3,c)$-AD code for any $c>0$.
Hence, this is consistent with the fact that the Shor code corrects two AD errors~\cite{gottesman-thesis}.
\ee

\section{Auxiliary weight enumerators} \label{sec:aux}
Without the existence of a MacWilliams identity, we can nonetheless establish a linear relationship between $|A \>$ and $|B \>$ by introducing additional vectors that reside on an auxiliary space.
We call these vectors \textit{auxiliary exact weight enumerators} or \textit{ auxiliary weight enumerators} for short.
Now the projector $P$ of a quantum code, when decomposed in the Pauli basis, can be written as
\begin{align}
P=&  \sum_{\sigma\in {\cG}_n} 
\frac{\tr(\sigma P)}{2^{n}} \sigma. \label{eq:P}
\end{align}
Our auxiliary weight enumerator depends on the Pauli decomposition~\eqref{eq:P}, and is given by 
\begin{align}
    \psiew
    &= |\phi\> \otimes |\phi\>  ,
\end{align}
where
\begin{align}
|\phi\> = \sum_{\sigma  \in  {\cG}_n} \tr(\sigma P)  |\sigma\> .
\end{align}
The auxiliary weight enumerator is exact in the sense that it encompasses complete information about the quantum code's projector. 
We emphasize that the state $|\phi\>$ depends only  on the code's projector $P$.
Hence $|\phi\>$ is independent of the parameters of the AD channel. 
Since $\tr(\sigma P)\tr(\tau P)$ is invariant under the swap of $\sigma$ and $\tau$, it follows that \begin{align}
\Pi_{} \psiew &= \psiew   ,
\end{align} 
where
\begin{align}
    \Pi_{} 
    &= \sum_{\sigma, \tau \in {\cG}_n}
    |\sigma\>\<\tau| \otimes |\tau\>\<\sigma|.
\end{align}
We later exploit this permutation symmetry to introduce additional constraints in our linear programming bound for amplitude damping channels.

\section{Connection Matrices}\label{sec:connection-matrices}

To establish the connection between our auxiliary weight enumerator with the two generalized weight enumerators, we define matrices that relate the $A$- and $B$-type generalized weight enumerators with the auxiliary weight enumerator as follows:
\begin{align}
M_A\supewe &= 
    \sum_{i=0}^n \sum_{E \in \cK_i}
    \sum_{\sigma , \tau \in {\cG}_n}
    2^{-2n}
    \tr(E \sigma )
    \tr(E^\dagger \tau)
    |i\> \<\sigma|\<\tau|,\label{eq:MA-defi}\\
M_B \supewe 
    &= 
    \sum_{i=0}^n   \sum_{E \in \cK_i}
    \sum_{\sigma , \tau \in {\cG}_n}
    2^{-2n}
    \tr(E \sigma E^\dagger \tau)
    |i\> \<\sigma|\<\tau|
    \label{eq:MB-defi}.
\end{align}
The matrices $M_A$ and $M_B$  establish an indirect linear relationship between the generalized enumerators $|A\>$ and $|B\>$  via an additional linear relationship with the auxiliary weight enumerator. Namely, we have the following linear relationships.
\begin{lemma} 
\label{lem:connection-matrices}
The following matrix identities hold.
\begin{align}
M_A \supewe  \psiew &=(\tr P )^2 |A \>, 
\label{eq:MAewe}\\
M_B \supewe \psiew &= \tr P    |B \>. \label{eq:MBewe}  
\end{align}
\end{lemma}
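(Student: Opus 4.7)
The proof is essentially a direct verification using the Pauli basis expansion of $P$. My plan is as follows.

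\textbf{Step 1 (Unpacking the auxiliary state).} First I would compute the inner products $\langle\sigma|\phi\rangle$. Since $\{|\sigma\rangle : \sigma \in \cG_n\}$ is an orthonormal basis of the auxiliary space and $|\phi\rangle = \sum_\sigma \tr(\sigma P)|\sigma\rangle$, we have $\langle\sigma|\phi\rangle = \tr(\sigma P)$. Applying this to both tensor factors of $\psiew = |\phi\rangle\otimes|\phi\rangle$, both sides of \eqref{eq:MAewe} and \eqref{eq:MBewe} are reduced to explicit scalar sums indexed by $\sigma,\tau\in\cG_n$.

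\textbf{Step 2 (The key Pauli completeness identity).} The workhorse is the normalization $\tr(\sigma\tau) = 2^n\delta_{\sigma,\tau}$, equivalently the expansion $X = \sum_\sigma 2^{-n}\tr(\sigma X)\sigma$ valid for any $X$ acting on $\mathbb{C}^{2^n}$. Taking trace against an operator $Y$ gives the scalar version
\begin{align}
\tr(XY) = \sum_{\sigma\in\cG_n} 2^{-n} \tr(\sigma X)\tr(\sigma Y).
\end{align}
This is precisely the identity that will collapse the $\sigma$- and $\tau$-sums in the definitions of $M_A$ and $M_B$.

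\textbf{Step 3 (Proof of \eqref{eq:MAewe}).} Substituting from Steps 1--2 into \eqref{eq:MA-defi}, the coefficient of $|i\rangle$ in $M_A\psiew$ factors as
\begin{align}
\sum_{E\in\cK_i}\Bigl(\sum_{\sigma} 2^{-n}\tr(E\sigma)\tr(\sigma P)\Bigr)\Bigl(\sum_{\tau} 2^{-n}\tr(E^\dagger\tau)\tr(\tau P)\Bigr),
\end{align}
and by the identity in Step 2 each bracket equals $\tr(EP)$ and $\tr(E^\dagger P)$ respectively. Comparing with the definition of $A_i$ finishes this case.

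\textbf{Step 4 (Proof of \eqref{eq:MBewe}).} Here the $\sigma$- and $\tau$-sums do not factor, so I would instead substitute the decomposition $P = \sum_\sigma 2^{-n}\tr(\sigma P)\sigma$ \emph{twice} into $\tr(EPE^\dagger P)$, yielding
\begin{align}
\tr(EPE^\dagger P) = \sum_{\sigma,\tau\in\cG_n} 2^{-2n}\tr(\sigma P)\tr(\tau P)\,\tr(E\sigma E^\dagger \tau).
\end{align}
Summing over $E\in\cK_i$ and comparing with the coefficient of $|i\rangle$ in $M_B\psiew$ (obtained from Step 1 applied to \eqref{eq:MB-defi}) gives $(M_B\psiew)_i = \sum_{E\in\cK_i}\tr(EPE^\dagger P) = (\tr P)\,B_i$, as required.

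\textbf{Anticipated obstacle.} There is no real obstacle, only bookkeeping: the only thing to be careful about is ensuring that the inner products $\langle\sigma|\phi\rangle$ really produce the coefficients that allow the Pauli-basis expansion of $P$ (or equivalently of $EP$) to be recognized. Once that is set up correctly, \eqref{eq:MAewe} follows from a factoring argument and \eqref{eq:MBewe} follows from a double Pauli-expansion of the two copies of $P$ inside the trace.
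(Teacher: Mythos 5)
Your proposal is correct and follows essentially the same route as the paper: both proofs expand $P$ in the Pauli basis (equivalently, invoke $\tr(XY)=\sum_\sigma 2^{-n}\tr(\sigma X)\tr(\sigma Y)$) and match the resulting double sum over $\sigma,\tau$ term-by-term against the explicit expression for $M_A\psiew$ and $M_B\psiew$. The only cosmetic difference is direction -- you collapse the sums in $M_{A/B}\psiew$ down to $A_i$ and $B_i$, while the paper expands $A_i$ and $B_i$ up to the double sum -- which is the same computation read backwards.
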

\begin{proof} 
By expanding the code projector $P$ in the Pauli basis,
we get
\begin{align}
    |A  \>   =&
    \frac{1}{(\tr P )^2}
    \sum_{i=0}^n
    \sum_{E \in \cK_i} 2^{-2n}
    \sum_{\sigma, \tau \in {\cG}_n} 
    \tr(E\sigma)\tr(E^\dagger \tau)
    \notag\\
    &\quad \times
    \tr(\sigma P) \tr( \tau P) |i\>.
\end{align}
Also we can see that 
\begin{align}
    M_A \supewe \psiew
    =
    \sum_{i=0}^n
    \sum_{\substack{
    E \in \cK_i\\
    \sigma , \tau \in {\cG}_n
    }}
  \frac{  \tr(E \sigma )
    \tr(E^\dagger \tau) }{2^{2n} }
    |i\>  
    \tr(\sigma P) 
    \tr(\tau P)  . 
\end{align}
Hence \eqref{eq:MAewe} holds. 

To obtain the second result,
we also expand the code projector $P$ in the Pauli basis to get
\begin{align}
    |B \>   &=
    \frac{1}{\tr(P)}
    \sum_{i=0}^n
    \sum_{E \in \cK_i} 2^{-2n}
    \sum_{\sigma, \tau \in {\cG}_n} 
    \tr(E\sigma E^\dagger \tau)
    \tr(\sigma P) \tr( \tau P) |i\>.
\end{align}
Next, note that 
\begin{align}
    M_B \psiew
    =
    \sum_{i=1}^n     \sum_{\substack{
    E \in \cK_i\\
    \sigma , \tau \in {\cG}_n
    }}
    2^{-2n}
    \tr(E \sigma E^\dagger \tau)
    |i\>  
    \tr(\sigma P) 
    \tr(\tau P)  .
\end{align}
The result 
$M_B \supewe  \psiew = \tr(P)    |B \>$ then follows.
\end{proof}
While we do not have a direct linear relationship between the generalized quantum weight enumerators $|A\>$ and $|B\>$, Lemma~\ref{lem:connection-matrices} establishes a linear relationship between each generalized quantum weight enumerator and the auxiliary weight enumerator. This thereby establishes an indirect linear relationship between $|A\>$ and $|B\>$, which later allows us to obtain linear programming bounds for amplitude damping codes. 

It is also important to note the following properties of connection matrices.
\begin{enumerate}
    \item The connection matrices $M_A\supewe$ and $M_B\supewe$ are devoid of information content about the code, because they are both independent of the code projector $P$.
\item The connection matrices $M_A\supewe$ and $M_B\supewe$ depend on the damping parameter $\gamma$ of AD  channel.
\end{enumerate}

\section{Linear Programming Bounds} \label{sec:linear program bounds}

From the above discussion, the weight enumerators $\ket{A}$ and $\ket{B}$ of a $(t,c)$-AD code must satisfy (\ref{eq:BleqA}), (\ref{eq:MAewe}), and (\ref{eq:MBewe}). We formulate a linear program with a constant objective function, and find non-negative variables $A_0,\dots,A_n$, $B_0,\dots,B_n$ that belong to a particular feasible region. 
The feasibility problem of our linear program is then equivalent to the following. 
\begin{align}
    {\rm Find}\ \ A_0,\dots,&A_n,B_0,\dots, B_n\notag\\
      {\rm subject\ to\ } 
(\tr P)^2|A  \>      &=M_A \supwe \psiew \notag\\
\tr P|B  \> &=M_B \supwe \psiew\notag\\
(B_i    - A_i)/\gamma^{t+1}  &\le c  , \quad 0 \le i \le d-1\notag\\ 
B_i    /\gamma^i  &\le \binom n i  , \quad 0 \le i \le n\notag\\ 
B_0+\dots +B_n      &\le 1 \notag\\ 
A_i      &\ge 0, \quad 0 \le i \le n  \notag\\
\Pi_{} \psiew &= \psiew. \label{eq:linear-constraints}
\end{align}
Since integer programs are hard to solve in general, 
our feasibility conditions are attractive because they have no integer constraints, in contrast to many other linear programming bounds for stabilizer codes \cite{CRSS98,LBW13,LA18,ALB20}. 
Hence, we have a linear program as opposed to an integer program.
However, one may wonder whether such a linear program is sufficiently constrained to be potentially infeasible.
We demonstrate numerically that our linear program can be infeasible, by analyzing the potential of using three qubits to correct a single AD error.
To do this, we have an additional observation that our linear program is parameterized by $\gamma$.
Since a $(t,c)$-AD code is defined for an arbitrary $\gamma$, we can concatenate the linear constraints for various values of $\gamma$. 
Crucially, constraints for different values of $\gamma$ are related because $\psiew$ is independent of $\gamma$. We illustrate the linear dependence of all of our linear constraints in Fig.~\ref{fig:star}.

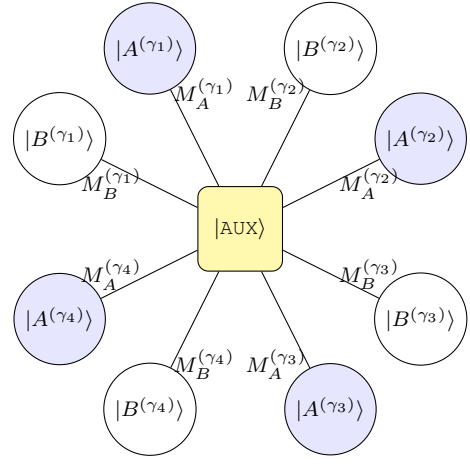
\begin{figure}
\begin{center}
		\begin{tikzpicture}[scale=1.2][thick]
		\fontsize{8pt}{1} 
		\tikzstyle{checknode} = [draw,fill=blue!10,shape= circle,minimum size=0.4em]
		\tikzstyle{variablenode} = [draw,fill=white, shape=circle,minimum size=0.4em]
		\tikzstyle{measurenode} = [draw,fill=yellow!40,shape= rectangle,rounded corners, minimum size=4.0em]
		\node[measurenode] (mn1) at (0,0) {$\ket{{\texttt{AUX}}}$} ;
		\node[checknode] (cn1) at (-1,2) { $\ket{A^{(\gamma_1)}}$} ; ;
		\node[variablenode] (vn4) at (-1,-2) { $\ket{B^{(\gamma_4)}}$} ;
		\node[variablenode] (vn2) at (1,2) {$ \ket{B^{(\gamma_2)}}$} ;
		\node[variablenode] (vn3) at (2,-1) {$\ket{B^{(\gamma_3)}}$} ;
			\node[checknode] (cn3) at (1,-2) { $\ket{A^{(\gamma_3)}}$} ; ;
		\node[checknode] (cn4) at (-2,-1) { $\ket{A^{(\gamma_4)}}$} ;
		\node[checknode] (cn2) at (2,1) {$ \ket{A^{(\gamma_2)}}$} ;
		\node[variablenode] (vn1) at (-2,1) {$\ket{B^{(\gamma_1)}}$} ;
		\node (label1) at (-0.4,1.5) {$M_A^{(\gamma_1)}$} ;
	    \node (label2) at (0.4,1.5)  {$M_B^{(\gamma_2)}$}; 
	    \node (label1) at (-0.4,-1.5) {$M_B^{(\gamma_4)}$} ;
	    \node (label2) at (0.4,-1.5)  {$M_A^{(\gamma_3)}$}; 
		%
		\draw (cn1) --  (mn1) (vn1) --node[midway,left]{$M_B^{(\gamma_1)}$} (mn1)  (vn2) --(mn1) (cn2) --node[midway,right]{$M_A^{(\gamma_2)}$} (mn1);
		\draw (cn3) --  (mn1) (vn3) --node[right]{$M_B^{(\gamma_3)}$} (mn1)  (vn4) -- (mn1) (cn4) --node[midway,left]{$M_A^{(\gamma_4)}$} (mn1);
		%
		\end{tikzpicture}
\end{center}
	\caption{
	   The relationship between various enumerators is depicted here. Every A and B-type enumerator for differing values of AD parameter $\gamma_i$ relates linearly to the same auxiliary enumerator.
	  }\label{fig:star}
\end{figure}

To determine if our concatenated linear program is feasible, we code up the linear constraints in the \texttt{Matlab} solver \texttt{cvx}, and use the algorithm SDPT3.  
In the linear constraints of \eqref{eq:linear-constraints}, we write the monomials of $\gamma$ as denominators. This normalizes our constraints so that a numerical solver can be numerically stable even for small values of $\gamma$.
Also, when coding up the linear constraints of \eqref{eq:linear-constraints} in a solver, 
we do not explicitly construct the permutation matrix $\Pi$ because it is much too big.
Rather we specify its implied linear constraints directly into the optimizer environment for our linear program.

In our numerical study, we analyze the possibility of correcting a single AD error using three qubits.
We obtain mainly no-go results on the existence of a three-qubits code that corrects a single AD error.
For this, we consider four different values of $\gamma$ in the construction of our linear program.
More precisely, we numerically find the maximum $c$ for which the convex solver returns a result that says that the linear program is infeasible. 

\be
For $n=3$, $M=2$, $t=1$, we rule out $c=9.8 \times 10^4$ using
$\gamma =  0.1 ,   0.05,    0.01,    0.0001$.
\ee

\section{Discussions} \label{sec:discussions}
In this paper, we showed that quantum weight enumerators can be generalized to the setting of AQEC AD codes. Key to our analysis is our introduction of auxiliary weight enumerators, which allows us to establish an indirect linear relationship between the generalized quantum weight enumerators.

As it stands, the auxiliary weight enumerator is a vector of size $4^{2n}$ in the number of qubits $n$.
If we restrict our attention to stabilizer codes, this size potentially can be greatly reduced. This is because the representation of the code projectors of stabilizer codes in the Pauli basis can be written entirely in terms of the code's stabilizers. Since the number of stabilizers for an $[[n,k]]$ code is $2^{n-k}$, this number is far fewer than $4^{2n}$.
In view of this, we will discuss the extent in which the connection matrices and auxiliary weight numerators can be compressed in a subsequent work. This will make tractable exploration of the performance of larger sized codes. 

\section{Acknowledgements}\label{eq:acknow}
YO acknowledges support from the EPSRC (Grant No. EP/M024261/1) and the QCDA project (Grant No. EP/R043825/1)) which has received funding from the QuantERA ERANET Cofund in Quantum Technologies implemented within the European Union’s Horizon 2020 Programme.
CYL was financially supported from the Young Scholar Fellowship Program by Ministry of Science and Technology (MOST) in Taiwan, under Grant No. MOST108-2636-E-009-004.


\end{document}